\newcommand{\Sect}[1]{Section~\ref{#1}}
\newcommand{\Eq}[1]{Equation~\ref{#1}}
\newtheorem{theorem}{Theorem}[section]
\newtheorem{lemma}[theorem]{Lemma}
\theoremstyle{definition}
\newtheorem{definition}{Definition}[section]
\newcommand{\data}{\ensuremath{\vec{d}}}
\newcommand{\pr}[1]{\ensuremath{\mathrm{p}(#1)}}
\newcommand{\gvn}{\mid}
\newcommand{\sz}{spec-$z$}
\newcommand{\pz}{photo-$z$}
\newcommand{\pzpdf}{\pz\ PMF}
\newcommand{\Nz}{$\mathcal{N}(z)$}
\newcommand{\project}[1]{{\textsc{#1}}}
\newcommand{\lsst}{\project{LSST}}
\begin{document}

\title{How not to obtain the redshift distribution from probabilistic redshift estimates: \\
Under what conditions is it not inappropriate to estimate \\
the redshift distribution $N(z)$ by stacking photo-$z$ PDFs?}

\author{Alex I. Malz}
\email{aimalz@astro.ruhr-uni-bochum.de}
\affiliation{Ruhr-University Bochum, German Centre for Cosmological Lensing, Universit\"{a}tsstra{\ss}e 150, 44801 Bochum, Germany}

\date{\today}

\begin{abstract}
The scientific impact of current and upcoming photometric galaxy surveys is contingent on our ability to obtain redshift estimates for large numbers of faint galaxies.
In the absence of spectroscopically confirmed redshifts, broad-band photometric redshift point estimates (photo-$z$s) have been superceded by photo-$z$ probability density functions (PDFs) that encapsulate their nontrivial uncertainties.
Initial applications of photo-$z$ PDFs in weak gravitational lensing studies of cosmology have employed computationally straightforward stacking methodologies for obtaining the redshift distribution function $\mathcal{N}(z)$ that violate the laws of probability.  
In response, mathematically self-consistent models of varying complexity have been proposed in an effort to answer the question, ``What is the right way to obtain the redshift distribution function $\mathcal{N}(z)$ from a catalog of photo-$z$ PDFs?''
This letter aims to motivate adoption of such principled methods by addressing the contrapositive of the more common presentation of such models, answering the question, ``Under what conditions do traditional stacking methods successfully recover the true redshift distribution function $\mathcal{N}(z)$?''
By placing stacking in a rigorous mathematical environment, we identify two such conditions: those of perfectly informative data and perfectly informative prior information. 
Stacking has maintained its foothold in the astronomical community for so long because the conditions in question were only weakly violated in the past.
These conditions, however, will be strongly violated by future galaxy surveys.  
We therefore conclude that stacking must be abandoned in favor of mathematically principled methods in order to advance observational cosmology.
\end{abstract}

\maketitle

\section{Motivation}
\label{sec:intro}

Knowledge of galaxy redshifts $z$  is essential for understanding their evolutionary processes and for utilizing them as cosmological tracers of  large-scale structure.
Spectral observations of absorption and emission lines enable the high-confidence measurement of spectroscopic redshifts (\sz s), however, such measurements are costly and preclude large samples of high-redshift galaxies.
When high-resolution spectra are unavailable, broad-band photometry that is sensitive to the spectral continuum can be used to estimate galaxy redshifts \citep{baum_photoelectric_1962}.
Though inherently noisy, the relationship between redshift and photometry can be established from a library of spectral energy distribution (SED) templates or a representative sample of galaxies with known redshifts.
By boosting the cosmological sample size, especially at high redshift, \pz s have directly enabled the era of precision cosmology derived by weak gravitational lensing tomography and baryon acoustic oscillation peak measurements.  

However, \pz s are by their very nature imprecise and inaccurate compared to their spectroscopic counterparts. 
Per-galaxy \pz\ probability density functions (PDFs), defined over all possible redshifts and usually denoted as $\mathrm{P}(z)$, better encapsulate the nontrivial uncertainty landscape \citep{koo_photometric_1999}.
The most common application of \pz\ PDFs is their use in estimating the redshift distribution function \Nz\  of a sample of galaxies, a quantity essential to cosmological parameter constraints via the power spectra of weak gravitational lensing and large-scale structure \citep{mandelbaum_precision_2008, sheldon_photometric_2012, bonnett_redshift_2016, hildebrandt_kids-450_2017}.

When \pz\ PDFs are available instead of true redshifts $z^{\dagger}$, the simplest approach reduces each $\mathrm{P}(z)$ to a point estimate $\hat{z}$ of redshift by using $\delta(z, \hat{z})$ as a substitute for the unknown (and unknowable) $\delta(z, z^{\dagger})$, where $\delta(z, z') \equiv \{\infty, z = z';\  0, z \neq z'$ is the Dirac delta function between a random variable $z$ and a particular value $z'$ thereof, normalized such that $\int \delta(z, z') dz = 1$.
An intuitive approach to circumventing the loss of valuable knowledge of redshift uncertainty resulting from reduction of \pz\ PDFs to point estimates is to average  their \pz\ PDFs according to the \textit{stacked estimator} of the redshift distribution function \citep{lima_estimating_2008}, 
    \begin{align}
    \label{eqn:stack}
    \hat{\mathcal{N}}(z) &\equiv \sum_{i = 1}^{N} \mathrm{P}(z_{i}) ,
    \end{align}
of a sample of $N$ galaxies $i$.
A number of popular extensions to \Eq{eqn:stack} are addressed in \Sect{sec:math}.

Unfortunately, the stacked estimator $\hat{\mathcal{N}}(z)$ of the redshift distribution is mathematically incorrect \citep{hogg_data_2012}.
Even under simplifying assumptions, stacking yields a biased estimator of $N(z)$.
Consider a scenario in which the true redshift distribution $\mathcal{N}^{\dagger}(z) = \delta(z, z^{\dagger})$ is a delta function at $z^{\dagger}$ and each \pz\ PDF $\mathrm{P}(z_{i}) = \mathrm{N}(\hat{z}_{i}, \sigma^{2})$ is a Gaussian distributions with a shared variance $\sigma^{2}$ and a mean $\hat{z}_{i} \sim \mathrm{N}(z^{\dagger}, \sigma^{2})$ drawn from a Gaussian distribution centered at the shared true redshift $z^{\dagger}$ with the same shared variance $\sigma^{2}$;
in such a situation, $\hat{\mathcal{N}}(z)$ cannot be equal to $\mathcal{N}^{\dagger}(z)$.

Mathematically principled methodologies for recovering the redshift distribution, including those that simultaneously infer the \pz\ PDFs directly from photometry \citep{leistedt_hierarchical_2016, leistedt_hierarchical_2019} and those that use an existing set of \pz\ PDFs \citep{malz_how_2020, rau_composite_2021}, have been proposed and validated.
Despite the availability of robust methodological alternatives, however, \Eq{eqn:stack} and its close cousins have remained prevalent in modern cosmological analysis pipelines. 
In addition to the understandable inertia and expected technical challenges of adopting new approaches, stacking is bolstered by pervasive misconceptions about the causal structure of the problem of redshift inference and \pz\ PDFs as probabilistic objects overall \citep{gruen_combining_2017, jarvis_open_2018, malz_re:_2018}.

This letter is pedagogical in nature and serves as a companion to \citep{malz_how_2020}, which quantifies the detrimental consequences of stacking, derives a mathematically self-consistent alternative methodology for constraining the redshift distribution from a sample of \pz\ PDFs, and provides a public implementation thereof.
Here, we complementarily employ a series of thought experiments to motivate the paradigm shift whose necessity is demonstrated in \citep{malz_how_2020}.
We present a flexible mathematical framework in \Sect{sec:preamble}, derive the general form of the redshift distribution in \Sect{sec:math}, explore limiting cases of the general form in \Sect{sec:results}, and interpret the implications thereof in \Sect{sec:disco}.

\section{Definitions}
\label{sec:preamble}

To answer the question of when stacking can recover the true redshift distribution, we must first dust off our knowledge of probability and explicitly connect these classic concepts to the problem at hand.
We begin by considering the chance that a single galaxy's true redshift $z^{\dagger}_{i}$ takes some reference value $z'$.

\begin{definition}\label{def:binarystatespace}
	The \textit{outcome space} $\Omega(z^{\dagger}_{i})$ of the true redshift $z_{i}^{\dagger}$ of a single-galaxy $i$ is $-\infty < z^{\dagger}_{i} < \infty$, although in cosmology we can safely assume $z \geq 0$.
\end{definition}

\begin{definition}\label{def:event}
	An \textit{event} is a collection of possible outcomes;
	in our case, the experiment may be the natural occurence of the true redshift $z_{i}^{\dagger}$ of a galaxy $i$, though it may be unknown to us.
\end{definition}

\begin{definition}\label{def:disjoint}
	The possible outcomes in $\Omega(z_{i}^{\dagger})$ are \textit{disjoint} if one occurring means that all others cannot occur; 
	our single galaxy $i$ cannot satisfy both ${z^{\dagger}_{i} = z'}$ and ${z^{\dagger}_{i} = \lnot z'}$.
\end{definition}

\begin{definition}\label{def:pdens}
	The \textit{probability density function (PDF)} ${\mathrm{P}}(z^{\dagger}_{i} = z') \geq 0$ is the chance that a galaxy's redshift $z_{i}^{\dagger}$ takes the value $z'$ if $z$ is continuous.
\end{definition}

Though redshift $z$ is continuous in reality, we assume, for the pedagogical purposes of this letter, that redshift $z$ is a discrete random variable ${z_{i}^{\dagger} \in \{z', \lnot z'\}}$ that can take either the value $z'$ of some reference redshift or any other value of redshift $\lnot z'$.
This assumption of a binary discrete outcome space is made without loss of generality, as all results may be straightfowardly extended to the limit of continuous redshift;
guidance to relevant proofs for such extensions is provided throughout \Sect{sec:math}.

\begin{definition}\label{def:pmass}
	In our pedagogical example of the binary discrete outcome space, ${\pr{z^{\dagger}_{i} = z'} \equiv \int_{z'-\epsilon}^{z'+\epsilon} \mathrm{P}(z^{\dagger} = z) \mathrm{d}z \geq 0}$, for a very small $0 \leq \epsilon \ll 1$, is instead a \textit{probability mass function (PMF)};
	it must be noted that this integral over a continuous PDF $\mathrm{P}(z)$ respects the physical units of redshift, whereas the resulting discrete PMF $\pr{z}$ is unitless \cite{hogg_data_2012}.
\end{definition}

\begin{definition}\label{def:normalization}
	A PMF must satisfy the \textit{normalization} condition ${\sum_{z} \pr{z^{\dagger}_{i} = z}} = 1$;
	the PMF over our binary outcome space $\Omega(z^{\dagger}_{i})$ satisfies ${\pr{z^{\dagger}_{i} = z'} = 1 - \pr{z^{\dagger}_{i} = \lnot z'}}$.
\end{definition}

The above definitions pertain to the redshift of a single galaxy $i$, but this letter concerns the distribution \Nz\ of redshifts of an ensemble $S$ of $|S| = N$ galaxies, which is some pre-defined sample that need not include all galaxies in the universe.
A small window about a single redshift $z' \pm \epsilon$ for small $0 < \epsilon \ll 1$ contains the true redshift of some whole number ${\mathcal{N}(z') = K^{\dagger}}$ of those galaxies.
The discrete outcome space ${\Omega(K^{\dagger}) = \{K^{\dagger} = 0, \dots, K^{\dagger} = N\}}$ of the true number $K^{\dagger}$ of galaxies with true redshift $z'$ thus has $N + 1$ elements, and there is some probability ${\pr{K^{\dagger} = K}}$ of each value of $K$ that must be related to the \pzpdf s ${\{\pr{z^{\dagger}_{i} = z'}\}_{i = 1, \dots, N}}$.

\begin{definition}\label{def:conditional}
	We may consider the \textit{conditional} probability that an event occurs given that another event occurs;
	for example, if ${z^{\dagger}_{i} = z'}$, then ${\pr{K^{\dagger} = 0 \gvn z^{\dagger}_{i} = z'} = 0}$.
\end{definition}

\begin{definition}\label{def:independence}
	Multiple events are \textit{statistically independent} if the probability of each occurring does not affect the probability that the others occur;
	for this work, we broadly assume the statistical independence ${\pr{z^{\dagger}_{i}} \perp \pr{z^{\dagger}_{h \neq i}}}$ of the redshifts of individual galaxies $h, i \in S$ from an arbitrary set $S$ of galaxies.
\end{definition}

\begin{definition}\label{def:intersection}
    In a universe with an ensemble $S$ of $|S| = N$ galaxies whose redshifts are statistically independent, the \textit{intersection} of the probability that they all have a specific redshift $z'$, which is also the probability that the number of galaxies with redshift $z'$ is equal to the total number of galaxies ${\pr{K^{\dagger} = N}}$, is the product of the probabilities that each has that redshift: ${\pr{z^{\dagger}_{1} = z' \cap \dots \cap z^{\dagger}_{N} = z'} = \pr{z^{\dagger}_{1} = z'} \times \dots \times \pr{z^{\dagger}_{N} = z'}}$.
\end{definition}

Definition~\ref{def:intersection} implies that the stacked estimator $\hat{\mathcal{N}}(z)$ of the redshift distribution given by Equation~\ref{eqn:stack} can be equal to the true redshift distribution if ${\pr{z^{\dagger}_{1} = z' \cap \dots \cap z^{\dagger}_{N} = z'}} = {\pr{z^{\dagger}_{1} = z'} + \dots + \pr{z^{\dagger}_{N} = z'}}$, which does not directly follow from any combination of the above definitions and in fact may well violate Definition~\ref{def:normalization}.
Because  Equation~\ref{eqn:stack} yields a single estimate $\hat{K}$ rather than the probability distribution $\pr{K}$ over the value of the redshift distribution, we work with a summary statistic of $\pr{K}$ to make an apples to apples comparison.

\begin{definition}\label{def:expected}
	The \textit{expected value} of a continuous random variable such as $z$ is its first moment ${\langle z_{i}^{\dagger} \rangle \equiv \int_{-\infty}^{\infty}z'\mathrm{P}(z_{i}^{\dagger} = z')\mathrm{d}z'}$;
	the expected value of a discrete random variable such as $K$ is defined as ${\langle K^{\dagger} \rangle \equiv \sum_{K' = 0}^{N} K' \pr{K^{\dagger} = K'}}$, which is not restricted to discrete values.
\end{definition}

Now, we may derive the tools necessary to determine when the stacked estimator of the redshift distribution is  equal to the mathematically self-consistent expected value of the redshift distribution and thus valid.

\section{Derivation}
\label{sec:math}

Here we compare the stacked estimator $\hat{\mathcal{N}}(z')$ of the redshift distribution to the expected value of the PMF over all possible values of the redshift distribution, derived using the tools of combinatorics.

We consider a generic set ${S_{j} = \{i_{1}, \dots, i_{K}\}}$ of ${|S_{j}| = K}$ galaxies $i \in S_{j}$ and its complement 
$\lnot S_{j}$ of ${|\lnot S_{j}| = N - K}$ galaxies $i \notin S_{j}$.
All $N$ galaxies belong to one set or the other, and no galaxy can be in both $S_{j}$ and $\lnot S_{j}$.
One can imagine sets such as these being  galaxies with true redshifts equal to $z'$ and $\lnot z'$.

Since galaxies have true redshifts in nature, there is a true number $K^{\dagger}$ of galaxies with $z_{i}^{\dagger} = z'$.
If we aim to recover $\mathcal{N}(z')$, we only care about the size of the set, not its membership.
Thus we seek sets $\{S_{j}\}$ with ${|S_{j}| = K^{\dagger}}$ galaxies out of all $[N]^{K}$ possible subsets of $N$ galaxies.
However, to find the probability that $\mathcal{N}(z')$ takes its true value $K^{\dagger}$, we will need to acknowledge that there is exactly one set $S_{j^{\dagger}}$ out of the set $[N]^{K^{\dagger}}$ of all sets $S_{j}$ with ${|S_{j}| = K^{\dagger}}$ galaxies that contains all galaxies of true redshift ${z_{i}^{\dagger} = z'}$ and no galaxies that have redshift ${z_{i}^{\dagger} = \lnot z'}$, for each possible redshift $z'$.

Constructing all these sets for a cosmologically relevant sample of $N$ galaxies is unfortunately impractical due to the following two theorems, presented here as definitions, whose proofs by induction may be found in \cite{pitman_probability_1999}.

\begin{definition}\label{lem:permutations}
	The number of possible orderings or \textit{permutations} of $N$ galaxies is ${N! \equiv N \cdot (N - 1) \cdot \dots \cdot 1}$.
\end{definition}

\begin{definition}\label{lem:combinations}
	The number of possible unordered subsets or \textit{combinations} of $K$ galaxies that can be selected from a set of $N \geq K$ galaxies is ${\binom{N}{K} \equiv \frac{N!}{K! (N - K)!}}$.
\end{definition}

Because there are so many possible sets of $K^{\dagger}$ galaxies in question and we do not even know the value of $K^{\dagger}$ in the first place, it is not possible to check them by hand, but this thought experiment mandates that we consider them.
We are now prepared to evaluate the generic probability ${\pr{K^{\dagger} = K'}}$ that $\mathcal{N}(z')$ takes any particular value $K'$ at a fixed $z'$, where ${\mathcal{N}(z') \equiv \sum_{i \in S_{j}} \begin{cases} 
1, & z'-\epsilon < z_{i}^{\dagger} < z'+\epsilon\\
0, & \mathrm{else}
\end{cases}}$.
We begin with a toy case of simplifying assumptions and strip them away one by one to obtain a general result.

\subsection{A universe of identical galaxies}
\label{sec:tworedshift}

It is instructive to first consider the special case in which there is one \pzpdf\ ${\pr{z'} \equiv \pr{z_{i}^{\dagger} = z'}}$ shared among each galaxy $i$ in the sample of $N$ galaxies.

\begin{lemma}\label{lem:identical}
	In a universe in which all galaxies have the same \pzpdf\ with probability $\pr{z'}$ of having redshift $z'$, the PMF over the redshift distribution is
	\begin{equation}
	\label{eqn:binomial}
	\pr{K^{\dagger} = K'} = \binom{N}{K'} \pr{z'}^{K'} (1 - \pr{z'})^{N - K'} .
	\end{equation}
\end{lemma}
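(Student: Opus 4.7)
The plan is to recognize that the setup of Lemma~\ref{lem:identical} is precisely that of $N$ independent Bernoulli trials with success probability $\pr{z'}$, and to derive the binomial PMF by enumerating and summing over disjoint intersection events using the combinatorial tools assembled earlier in Section~\ref{sec:math}.

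First I would fix an arbitrary value $K'$ and write the event $\{K^{\dagger} = K'\}$ as a disjoint union, over all subsets $S_{j} \subseteq \{1, \dots, N\}$ with $|S_{j}| = K'$, of the fully specified configuration events
\[ E_{S_{j}} \equiv \bigl(\bigcap_{i \in S_{j}} \{z^{\dagger}_{i} = z'\}\bigr) \cap \bigl(\bigcap_{i \in \lnot S_{j}} \{z^{\dagger}_{i} = \lnot z'\}\bigr). \]
Disjointness of $E_{S_{j}}$ and $E_{S_{j'}}$ for distinct subsets follows from Definition~\ref{def:disjoint} applied to any galaxy on which the two subsets differ, since no single galaxy can simultaneously realize both $z^{\dagger}_{i} = z'$ and $z^{\dagger}_{i} = \lnot z'$.

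Next I would invoke Definition~\ref{def:independence} together with Definition~\ref{def:intersection} to factor each $\pr{E_{S_{j}}}$ into a product of $N$ per-galaxy probabilities. By Definition~\ref{def:normalization} applied to the binary outcome space, the complementary per-galaxy probability is $1 - \pr{z'}$, and under the shared-PDF hypothesis of the lemma every galaxy carries the same $\pr{z'}$. The product therefore collapses to $\pr{z'}^{K'} (1 - \pr{z'})^{N - K'}$, a value that depends only on the size $K'$ of $S_{j}$, not on its particular membership. Finally, Definition~\ref{lem:combinations} supplies exactly $\binom{N}{K'}$ such subsets, and summing the identical probabilities of these $\binom{N}{K'}$ disjoint configurations yields Equation~\ref{eqn:binomial}.

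The main obstacle is almost entirely organizational rather than computational: one must present the decomposition of $\{K^{\dagger} = K'\}$ as a union of fully specified configurations that simultaneously pin down the redshifts of \emph{all} $N$ galaxies, rather than as overlapping ``at least $K'$ galaxies have redshift $z'$'' style events, so that additivity of probabilities is justified by the disjointness argument above rather than requiring inclusion--exclusion. The continuous-redshift generalization flagged in Section~\ref{sec:math} would replace $z'$ by a thin bin $[z' - \epsilon, z' + \epsilon]$ and $\lnot z'$ by its complement, leaving the combinatorial bookkeeping unchanged.
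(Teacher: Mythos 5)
Your proposal is correct and follows essentially the same route as the paper's proof: the paper frames the problem as $N$ coin flips with success probability $\pr{z'}$ and sums over all size-$K'$ subsets of galaxies, which is exactly your decomposition into disjoint, fully specified configuration events each contributing $\pr{z'}^{K'}(1-\pr{z'})^{N-K'}$. Your write-up is simply more explicit about where disjointness and independence enter, whereas the paper compresses those steps into an appeal to the binomial theorem.
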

\begin{proof}
In terms of the discrete binary outcome space of Definition~\ref{def:binarystatespace}, the problem is equivalent to that of counting $K$ successes after flipping a single coin with $\pr{\mathrm{success}} = \pr{z'}$ a total of $N$ times.
The $N$ coin flips build a set $S_{j}$ of ${|S_{j}| = K}$ galaxies with redshift $z'$ and a set $\lnot S_{j}$ of ${|\lnot S_{j}| = N - K}$ galaxies with redshift $\lnot z$'.
If we want the probability of $K'$ successes, we sum over the set $[N]^{K'}$ of all combinations of galaxies that could define these sets.
Bearing in mind that our galaxies have ${\pr{\lnot z'} = 1 - \pr{z'}}$, due to Definitions \ref{def:binarystatespace} and \ref{def:disjoint}, the probability of $K'$ galaxies having redshift $z'$ is given by the binomial theorem, which is \Eq{eqn:binomial}.
\end{proof}

One may generalize the two-redshift universe to a multi-valued redshift universe by following the classic proof extending the binomial theorem to the ``birthday problem'' seeking the probability that $K'$ people in a room of $N$ people share the same birthday $z'$.

To instead estimate the redshift density ${n(z') \equiv \mathcal{N}(z') / N}$, a normalized version of the redshift distribution $\mathcal{N}(z')$, for all galaxies ${N \to \infty}$, not just those observed in a given sample, one may follow the classic derivation of the Poisson distribution from the binomial distribution under the limit of $N\to\infty$.

\subsection{A universe of unique galaxies}
\label{sec:unique}

Of course galaxies do not share the same \pzpdf, so we must next extend the toy case of identical galaxies to a more realistic set of galaxies with unique \pzpdf s ${\{\pr{z'_{i}} \equiv \pr{z_{i}^{\dagger} = z'}\}}$.

\begin{lemma}\label{lem:unique}
	In a universe where galaxies do not necessarily all have the same \pzpdf, the probability that $\mathcal{N}(z')$ takes the value $K'$ is 
	\begin{align}
	\label{eqn:general}
	\pr{K^{\dagger} = K'} &= \sum_{j \backepsilon |S_{j}| = K'}^{\binom{N}{K'}} \left[ \prod_{i \in S_{j}}^{K'} \pr{z'_{i}} \prod_{i \notin S_{j}}^{N - K'} 1 - \pr{z'_{i}} \right] .
	\end{align}
\end{lemma}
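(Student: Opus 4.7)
The plan is to mirror the proof of Lemma~\ref{lem:identical}, but replace the single-coin, multi-flip bookkeeping with a labeled, heterogeneous bookkeeping that tracks which specific galaxies land in the ``success'' set $S_j$ and which land in its complement. First, I would fix a particular labeled subset $S_j$ with $|S_j| = K'$ and compute the probability of the specific joint event that every galaxy $i \in S_j$ has true redshift $z'$ while every galaxy $i \notin S_j$ has true redshift $\lnot z'$. Invoking the statistical independence of per-galaxy true redshifts (Definition~\ref{def:independence}) together with the intersection rule (Definition~\ref{def:intersection}), this joint probability factorizes across galaxies. Using the binary disjoint outcome space (Definitions~\ref{def:binarystatespace} and \ref{def:disjoint}) together with normalization (Definition~\ref{def:normalization}), the complementary factor is $1 - \pr{z'_i}$ for each $i \notin S_j$. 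This yields the bracketed summand in \Eq{eqn:general} for a single $S_j$.

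Second, I would argue that the event ``exactly $K'$ galaxies have redshift $z'$'' decomposes as the disjoint union, over all combinations $S_j$ of size $K'$ drawn from the $N$ galaxies, of the labeled events above. Two distinct subsets $S_j \neq S_{j'}$ of size $K'$ must differ on at least one galaxy, and since that galaxy cannot simultaneously satisfy $z^{\dagger}_i = z'$ and $z^{\dagger}_i = \lnot z'$ (Definition~\ref{def:disjoint}), the corresponding labeled events are mutually exclusive. By Definition~\ref{lem:combinations} there are exactly $\binom{N}{K'}$ such subsets, and they collectively cover every way the count can equal $K'$. Summing the per-subset probabilities therefore produces \Eq{eqn:general}.

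The main obstacle, such as it is, lies less in the algebra and more in carefully distinguishing the \textit{unordered} combinations $S_j$ from the underlying \textit{labeled} events on individual galaxies: the outer sum ranges over combinations (so that no configuration is double counted), but the inner product must be taken over the specific galaxy labels inside and outside $S_j$ (since the $\pr{z'_i}$ are heterogeneous). I would take care to state this explicitly, noting that in the special case ${\pr{z'_i} = \pr{z'}}$ for all $i$, the bracketed expression becomes independent of which subset $S_j$ of a given size is chosen, and the outer sum collapses to the binomial coefficient $\binom{N}{K'}$, thereby recovering \Eq{eqn:binomial} and confirming consistency with Lemma~\ref{lem:identical}. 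As before, extension to a non-binary redshift outcome space follows by iterating this argument in the spirit of the multinomial/birthday generalization already referenced in \Sect{sec:tworedshift}.
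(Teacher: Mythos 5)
Your proposal is correct and follows essentially the same route as the paper's proof: factorize the per-subset probability using independence and the binary complement, then sum over the $\binom{N}{K'}$ mutually exclusive subset events. Your additional remarks --- the explicit disjointness argument for distinct subsets and the consistency check against Lemma~\ref{lem:identical} --- are welcome elaborations but do not constitute a different method.
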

\begin{proof}
	By Definition \ref{def:independence}, the probability of obtaining set $S_{j}$ of ${|S_{j}| = K'}$ galaxies is the product of the probabilities that all the galaxies in the set $S_{j}$ have redshift $z'$ and the probabilities that all the galaxies outside that set, in $\lnot S_{j}$, have redshift $\lnot z'$.
	By Definition \ref{def:independence} again, the probability that $S_{j}$ corresponds to the true set of galaxies of redshift $z'$ is the product of the $K'$ probabilities ${\{\pr{z'_{i}}\}_{i \in S_{j}}}$ of the galaxies in $S_{j}$ having redshift $z'$, and the probability that all galaxies outside $S_{j}$ have redshift $\lnot z'$ is the product of the $N - K'$ probabilities ${\{\pr{\lnot z'_{i}}\}_{i \notin S_{j}}}$.
	By Definition \ref{def:binarystatespace}, $\pr{\lnot z'_{i}} = 1 - \pr{z'_{i}}$, so the probability $\pr{S_{j}}$ of obtaining a set of galaxies $S_{j}$ with ${|S_{j}| = K'}$ members is ${\pr{S_{j}} = {\prod_{i \in S_{j}} \pr{z'_{i}} \prod_{i \notin S_{j}} 1 - \pr{z'_{i}}}}$.
	Invoking Definition \ref{def:disjoint}, we know that ${\pr{K^{\dagger} = K'} = \sum_{j \backepsilon |S_{j}| = K'}^{\binom{N}{K'}} \pr{S_{j}}}$.
\end{proof}

Though \Eq{eqn:general} is computationally intractable for any nontrivial number of galaxies (and even moreso in the limit of continuous redshift), we answer the question at hand by comparing it with \Eq{eqn:stack} in the limiting cases of data- and prior-dominated \pzpdf s in Sections~\ref{sec:informative} and \ref{sec:uninformative} below.

\section{Results}
\label{sec:results}

We are now able to derive the general form of the expected value $\langle \mathcal{N}(z')\rangle$ of the number of galaxies with reference redshift $z'$, which we  then compare to the stacked estimator of the redshift distribution $\hat{\mathcal{N}}(z')$ given by Equation~\ref{eqn:stack}.

\begin{theorem}\label{thm:general}
	The expected value of the number of galaxies $\mathcal{N}(z')$ with redshift $z'$ in the general case of a universe of unique galaxies is given by
	\begin{align}
	\label{eqn:complete}
	\langle K^{\dagger} \rangle &= \sum_{K' = 0}^{N} K' \sum_{j \backepsilon |S_{j}| = K'}^{\binom{N}{K'}} \left[ \prod_{i \in S_{j}}^{K'} \pr{z_{i}^{\dagger} = z'} \prod_{i \notin S_{j}}^{N - K'} 1 - \pr{z_{i}^{\dagger} = z'} \right].
	\end{align}
\end{theorem}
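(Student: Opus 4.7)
The plan is to obtain Equation~\ref{eqn:complete} as a direct substitution: the theorem combines Definition~\ref{def:expected} (the expected value of a discrete random variable $K^{\dagger}$ supported on $\{0, 1, \dots, N\}$) with Lemma~\ref{lem:unique} (the explicit PMF over $K^{\dagger}$ in a universe of galaxies with distinct \pzpdf s). No new probabilistic machinery is required; the work has already been done in the preceding lemma, and Theorem~\ref{thm:general} is essentially a statement of bookkeeping.

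Concretely, I would first invoke Definition~\ref{def:expected} to write
\begin{equation*}
\langle K^{\dagger} \rangle = \sum_{K' = 0}^{N} K' \, \pr{K^{\dagger} = K'},
\end{equation*}
which is valid because by Definition~\ref{def:binarystatespace} together with the construction of $\mathcal{N}(z')$ as a count over $N$ galaxies, the outcome space $\Omega(K^{\dagger}) = \{0, 1, \dots, N\}$ is discrete and finite. Next I would substitute Equation~\ref{eqn:general} of Lemma~\ref{lem:unique} for $\pr{K^{\dagger} = K'}$, yielding
\begin{equation*}
\langle K^{\dagger} \rangle = \sum_{K' = 0}^{N} K' \sum_{j \backepsilon |S_{j}| = K'}^{\binom{N}{K'}} \left[ \prod_{i \in S_{j}}^{K'} \pr{z_{i}^{\dagger} = z'} \prod_{i \notin S_{j}}^{N - K'} 1 - \pr{z_{i}^{\dagger} = z'} \right],
\end{equation*}
which is exactly Equation~\ref{eqn:complete}. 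To make the substitution rigorous I would briefly remark that the inner double product is well-defined for every admissible subset $S_{j}$ because of the statistical independence asserted in Definition~\ref{def:independence}, and that the outer sum over $K'$ exhausts the disjoint events enumerated in Definition~\ref{def:disjoint}.

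There is no real technical obstacle in the derivation itself; the combinatorial complexity $\binom{N}{K'}$ of the inner sum makes Equation~\ref{eqn:complete} computationally intractable for realistic $N$, but the theorem only asserts the formal identity, not its evaluation. The substantive content of the theorem therefore lies not in the proof but in what Equation~\ref{eqn:complete} looks like relative to the stacked estimator $\hat{\mathcal{N}}(z')$ from Equation~\ref{eqn:stack}; the hard part, deferred to the subsequent limiting-case analysis in \Sect{sec:results}, is to identify the special regimes in which the outer sum and the subset-indexed inner sum collapse so that $\langle K^{\dagger} \rangle$ reduces to $\sum_{i} \pr{z_{i}^{\dagger} = z'}$.
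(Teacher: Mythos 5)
Your proposal is correct and matches the paper's own proof, which likewise just applies Definition~\ref{def:expected} to the PMF of Lemma~\ref{lem:unique} and substitutes Equation~\ref{eqn:general} into the expectation sum. The extra remarks you add about well-definedness and disjointness are harmless elaborations of the same one-line argument.
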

\begin{proof}
	Apply Definition \ref{def:expected} to Lemma \ref{lem:unique} to arrive at \Eq{eqn:complete}.
\end{proof}

Equation~\ref{eqn:complete} gives the key quantity against which we may compare the stacked estimator of the redshift distribution given in Equation~\ref{eqn:stack}.
In \Sect{sec:informative} and \Sect{sec:uninformative}, we expose the true nature of \pzpdf s by iteratively correcting our oversimplified notation.
By confronting what ``$\pr{z_{i}}$'' sweeps under the rug, we identify the only conditions under which Equation~\ref{eqn:stack} can yield a result consistent with \Eq{eqn:complete}.

\subsection{Perfectly informative data}
\label{sec:informative}

In Section~\ref{sec:unique}, we assert that \pzpdf s are in general different for each galaxy, and we now justify that claim and consider its implications.
Across the galaxy sample, redshifts are always defined over the same dimension, so the \pzpdf s are all defined over $z$, not $z_{i}$ as if it were another dimension unique to each galaxy $i$.
Rather, what differs between the \pzpdf s of the galaxy sample is the photometric data $\data_{i}$ upon which our knowledge of each galaxy's redshift is conditioned.
We correct our shorthand of ${\pr{z_{i}^{\dagger} = z'}}$ for the probability that a particular galaxy's true redshift takes a reference value with a more complete ${\pr{z = z' \gvn \data_{i}}}$ for the probability of a redshift being equal to a reference value conditional on the data observed from a particular galaxy in preparation to explore the consequences of this clarification.

The dependence of \pzpdf s on data transforms \Eq{eqn:stack} into
    \begin{align}
    \label{eqn:stackwithdata}
    \hat{\mathcal{N}}(z) &= \sum_{i = 1}^{N} \pr{z \gvn \data_{i}} .
    \end{align}
Crucially, \Eq{eqn:stackwithdata} is not correct. 
While it may look like Definition~\ref{def:normalization}, we can never integrate over the fixed value on the right side of a conditional; 
it is only mathematically valid to integrate over the free variable on the left side of a conditional \citep{hogg_data_2012}.

If the photometric data $\data_{i}$ is optimistically informative, meaning ${\pr{z \gvn \data_{i}} \approx \delta(z,\ z_{i}^{\dagger})}$, then \Eq{eqn:stackwithdata} approaches $K^{\dagger}$.
Though \pzpdf s cannot be delta functions because photometry is inherently noisy,
the PMF of an idealized errorless spectroscopic redshift could be considered a delta function like this.
We now assess whether \Eq{eqn:complete} yields an equivalent result to \Eq{eqn:stackwithdata} in the idealized case of errorless spectroscopic-quality observations.

\begin{theorem}
	\label{thm:informative}
	The mathematically valid expected value $\langle K \rangle$ of the redshift distribution at a reference redshift $z'$ given by \Eq{eqn:complete} is equivalent to the stacked estimator $\hat{\mathcal{N}}(z')$ at the reference redshift and the true number $K^{\dagger}$ of galaxies at the reference redshift if the data is perfectly informative, with ${\{\pr{z' \gvn \data_{i}} = \delta(z',\ z^{\dagger}_{i})\}}$.
\end{theorem}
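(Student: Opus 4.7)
The plan is to show that, under the perfectly informative assumption $\pr{z' \gvn \data_{i}} = \delta(z',\ z^{\dagger}_{i})$, both \Eq{eqn:stackwithdata} and \Eq{eqn:complete} reduce to the deterministic count $K^{\dagger}$, by exploiting the fact that each per-galaxy probability is forced to be either $0$ or $1$.

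First I would dispatch the stacked side. Under the delta-function assumption, $\pr{z' \gvn \data_{i}}$ equals $1$ exactly when $z^{\dagger}_{i} = z'$ and $0$ otherwise, so $\hat{\mathcal{N}}(z') = \sum_{i=1}^{N}\pr{z' \gvn \data_{i}}$ is nothing more than the count of galaxies whose true redshift is $z'$, which by definition is $K^{\dagger}$. This step is essentially by inspection.

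The substantive step is showing that the double sum in \Eq{eqn:complete} collapses to the same value. Let $S^{\dagger}$ denote the unique subset of the $N$ galaxies with $z^{\dagger}_{i} = z'$, so that $|S^{\dagger}| = K^{\dagger}$; this subset is well defined precisely because the delta-function assumption turns each $z^{\dagger}_{i}$ into a deterministic readout of $\data_{i}$. For an arbitrary candidate subset $S_{j}$ appearing in the inner sum, the factor $\prod_{i \in S_{j}}\pr{z' \gvn \data_{i}}$ vanishes unless every galaxy in $S_{j}$ has $z^{\dagger}_{i} = z'$, and the factor $\prod_{i \notin S_{j}}(1 - \pr{z' \gvn \data_{i}})$ vanishes unless every galaxy outside $S_{j}$ has $z^{\dagger}_{i} \neq z'$. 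These two conditions hold simultaneously only when $S_{j} = S^{\dagger}$, in which case each product equals $1$. Consequently the inner combinatorial sum equals $1$ when $K' = K^{\dagger}$ and $0$ for every other $K'$, so the outer sum in \Eq{eqn:complete} telescopes to the single surviving term $K^{\dagger} \cdot 1 = K^{\dagger}$, yielding $\langle K^{\dagger} \rangle = K^{\dagger} = \hat{\mathcal{N}}(z')$.

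The main obstacle is purely bookkeeping: one must verify that $S^{\dagger}$ is the unique subset of size $K^{\dagger}$ for which both products are nonzero, and that no spurious contribution slips in from subsets of other sizes. Once this is in hand the result is immediate, and the extension to continuous redshift follows from Definition~\ref{def:pmass} by replacing the point probability with $\int_{z'-\epsilon}^{z'+\epsilon}\delta(z,\ z^{\dagger}_{i})\,\mathrm{d}z$, which again equals $1$ or $0$ depending on whether $z^{\dagger}_{i}$ falls within the window, so the combinatorial collapse is unchanged.
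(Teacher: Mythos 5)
Your proposal is correct and follows essentially the same route as the paper's own proof: substitute the delta-function \pzpdf s into \Eq{eqn:complete}, observe that each factor becomes $0$ or $1$, and note that the unique subset $S_{j^{\dagger}}$ of the $K^{\dagger}$ galaxies with $z^{\dagger}_{i} = z'$ is the only one giving a nonzero product, so the double sum collapses to $K^{\dagger}$. Your explicit check of the stacked side and the remark on the continuous-redshift extension via the $\epsilon$-window of Definition~\ref{def:pmass} are welcome additions but do not change the argument.
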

\begin{proof}
	First, we replace all instances of ${\pr{z_{i}^{\dagger} = z'}}$ in \Eq{eqn:complete} with $\pr{z' \gvn \data_{i}}$, yielding
	    \begin{align}
	    \label{eqn:proofwithdata1}
	    \langle K \rangle &= \sum_{K' = 0}^{N} K' \sum_{j \backepsilon |S_{j}|=K'}^{\binom{N}{K'}} \left[ \prod_{i \in S_{j}} \pr{z' \gvn \data_{i}} \prod_{i \notin S_{j}} 1 - \pr{z' \gvn \data_{i}} \right] .
	    \end{align}
	For galaxies $i$ with perfectly informative data, each of their individual \pzpdf s ${\pr{z' \gvn \data_{i}}}$ becomes a delta function $\delta(z,\ z^{\dagger}_{i})$ centered at their true redshift, making \Eq{eqn:proofwithdata1} into
	    \begin{align}
	    \label{eqn:proofwithdata2}
	    \langle K \rangle &= \sum_{K' = 0}^{N} K' \sum_{j \backepsilon |S_{j}| = K'}^{\binom{N}{K'}} \left[ \prod_{i \in S_{j}} \delta(z', z^{\dagger}_{i}) \prod_{i \notin S_{j}} 1 - \delta(z', z^{\dagger}_{i}) \right] .
	    \end{align}
	Each of the delta function terms in \Eq{eqn:proofwithdata2} will be $1$ when ${z^{\dagger}_{i} = z'}$ and $0$ otherwise.
	Only sets of galaxies with ${|S_{j}| = K^{\dagger}}$ contribute to the outer sum, and there is only one set of galaxies $S_{j^{\dagger}}$ that results in a nonzero product within the inner sum.
	This single contributing set of galaxies is the one containing all ${K = K^{\dagger}}$ galaxies with ${z^{\dagger}_{i} = z'}$ and no interlopers with ${z^{\dagger}_{i} = \lnot z'}$.
	Thus the sole contributing term to the nested summations is
	    \begin{align}
	    \label{eqn:proofwithdata3}
	    \langle K \rangle &= K^{\dagger} \left[ \prod_{i \in S_{j^{\dagger}}} \delta(z', z^{\dagger}_{i}) \prod_{i \notin S_{j^{\dagger}}} 1 - \delta(z', z^{\dagger}_{i}) \right] .
	    \end{align}
	Since the entire bracketed quantity in \Eq{eqn:proofwithdata3} is $1$, we arrive at the desired ${\langle K \rangle = K^{\dagger}}$, the same as what stacking yields.
\end{proof}

To reiterate, the mathematically complete expected value of the redshift distribution is equivalent to the stacked estimator of the redshift distribution in the case of delta function \pzpdf s with perfect accuracy and precision, as would result from an idealized spectroscopic survey.
Ongoing and upcoming broad-band photometric surveys are unlikely to achieve the requisite photometric precision and accuracy for the validity of the stacked estimator.
If the \pzpdf s are not delta functions but can be well-described by a specific parametric form, it may also be possible to recover the true redshift distribution via deconvolution \citep{padmanabhan_calibrating_2005}.
Though it is outside the scope of this letter, a quantification of the information loss, on $N(z)$ and on recovered cosmological parameter constraints, due to using the stacked estimator rather than a mathematically self-consistent estimator under a generic parameterization of \lsst-like \pz\ requirements can be found in \cite{malz_how_2020}.

\subsection{Perfectly uninformative data}
\label{sec:uninformative}

\Sect{sec:informative} concerns \pzpdf s perfectly informed by  photometry, but that picture is still overly simplistic.
If \pzpdf s were conditioned solely on photometry, there would be no disagreement about how to derive them from photometric catalogs, because every approach would yield the same result. 
Thus the differences between \pzpdf s derived by each of dozens of methods indicate that \pzpdf s must be conditioned not only on data unique to each galaxy but also on prior information $\tilde{\theta}$ corresponding to a model for the relationship between data $\data$ and redshift $z$. 
This prior information may come from a template library or training set, as well as the way each algorithm combines those pieces of information with the actual observations to arrive at an estimated \pzpdf\ \citep{schmidt_evaluation_2020}.
The prior information is projected into the space of $\mathcal{N}(z')$ as some interim guess $\tilde{K}$ for the number of galaxies with reference redshift $z'$.

If \pzpdf s are more completely written as ${\pr{z \gvn \data_{i}, \tilde{K}}}$, \Eq{eqn:stackwithdata} becomes
    \begin{align}
    \label{eqn:stackwithprior}
    \hat{\mathcal{N}}(z) &= \sum_{i = 1}^{N} \pr{z \gvn \data_{i}, \tilde{K}} .
    \end{align}
If the photometric data were totally uninformative, as could occur  pessimistically in an infinitely deep, completely noise-dominated photometric survey, each galaxy $i$ would have the same ${\pr{z \gvn \data_{i}, \tilde{K}} \approx \pr{z \gvn \tilde{K}}}$. 
Furthermore, though the prior $\tilde{K}$ may take any possible value for $K$,  it is baked into the \pzpdf s on the right side of the conditional and cannot in general be modified at will.
Now, we consider what happens to \Eq{eqn:complete} under perfectly uninformative data with a prior $\tilde{K}$.

\begin{theorem}
	\label{thm:uninformative}
	The mathematically valid expected value $\langle K \rangle$ of the number of galaxies with reference redshift $z'$ is equivalent to the stacked estimator $\hat{\mathcal{N}}(z')$ at the reference redshift; both $\langle K \rangle$ and $\hat{\mathcal{N}}(z')$ are equal to the true number $K^{\dagger}$ of galaxies with redshift $z'$ under a special case of perfectly uninformative photometry and perfect prior information ${\tilde{K} = K^{\dagger}}$.
\end{theorem}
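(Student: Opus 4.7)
The plan is to reduce the perfectly uninformative data case to the identical-galaxies scenario of Lemma~\ref{lem:identical}, then apply the standard first moment of the resulting binomial distribution, and finally invoke the perfect prior hypothesis to pin the shared per-object probability to $K^{\dagger}/N$.

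First, I would substitute the more complete conditional $\pr{z' \gvn \data_{i}, \tilde{K}}$ into \Eq{eqn:complete} in place of $\pr{z_{i}^{\dagger} = z'}$, following the same rewriting step used in \Sect{sec:informative}. The assumption that the photometry is perfectly uninformative eliminates the dependence on $\data_{i}$, leaving $\pr{z' \gvn \data_{i}, \tilde{K}} \approx \pr{z' \gvn \tilde{K}}$ for every galaxy $i$. The universe then collapses onto the identical-galaxies setting of \Sect{sec:tworedshift}, so the full PMF over $\mathcal{N}(z')$ reduces to the binomial of \Eq{eqn:binomial} with success probability $\pr{z' \gvn \tilde{K}}$.

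Second, I would apply Definition~\ref{def:expected} to that binomial PMF to obtain $\langle K \rangle = N \pr{z' \gvn \tilde{K}}$. This is just the first moment of a binomial, recoverable either by direct rearrangement of the nested sums in \Eq{eqn:complete} after pulling the constant $\pr{z' \gvn \tilde{K}}$ out of each product, or by the standard binomial identity $\sum_{K'} K' \binom{N}{K'} p^{K'}(1-p)^{N-K'} = Np$. The parallel computation for the stacked estimator of \Eq{eqn:stackwithprior} is immediate: every summand is identical, so $\hat{\mathcal{N}}(z') = N \pr{z' \gvn \tilde{K}}$ as well. At this point the two expressions already agree, demonstrating a weaker, prior-independent half of the theorem.

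Third, I would invoke the perfect-prior hypothesis $\tilde{K} = K^{\dagger}$ to fix $\pr{z' \gvn \tilde{K}} = \tilde{K}/N = K^{\dagger}/N$, so that both $\langle K \rangle$ and $\hat{\mathcal{N}}(z')$ collapse to $K^{\dagger}$. The hard part is justifying this last identification cleanly: ``perfect prior information'' must be phrased as a consistency condition linking a prior over individual redshifts to a prior over the population-level count. I expect an auxiliary remark will be needed stating that a delta-function prior peaked at $\tilde{K}$ for the count, together with the exchangeable product-form structure inherited from Lemma~\ref{lem:identical}, uniquely forces the shared per-object probability to equal $\tilde{K}/N$. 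This is the pedagogically delicate step, because the theorem's rhetorical force lies precisely in how tightly the prior must be constrained — essentially, the prior must already encode the answer — for stacking to be rescued.
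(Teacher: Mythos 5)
Your proposal is correct and follows essentially the same route as the paper's proof: reduce the uninformative-data case to the identical-galaxies binomial of Lemma~\ref{lem:identical} with success probability $\pr{z' \gvn \tilde{K}} = \tilde{K}/N$, compute the binomial first moment (the paper does this by the explicit index-shift manipulation rather than citing the identity, a cosmetic difference), observe that the stacked estimator gives the same $N\cdot\tilde{K}/N = \tilde{K}$, and conclude that agreement with $K^{\dagger}$ forces $\tilde{K} = K^{\dagger}$. Your flagged concern about rigorously justifying $\pr{z' \gvn \tilde{K}} = \tilde{K}/N$ is fair --- the paper simply asserts it --- but this does not change the argument.
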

\begin{proof}
	If every galaxy has the same \pzpdf, then every galaxy also has the same probability ${\pr{z' \gvn \tilde{K}} = \frac{\tilde{K}}{N}}$ of having the reference redshift.
	The stacked estimator is thus ${\hat{\mathcal{N}}(z) = N \cdot \pr{z \gvn \tilde{K}} = \tilde{K}}$.
	To start deriving the complete, correct result, we first update \Eq{eqn:complete} to reflect our new understanding of the role of the prior $\tilde{K}$ in estimated \pzpdf s.
	    \begin{align}
	    \label{eqn:proofwithprior1}
	    \langle K \rangle &= \sum_{K' = 0}^{N} K' \sum_{j \backepsilon |S_{j}|=K'}^{\binom{N}{K'}} \left[ \prod_{i \in S_{j}} \pr{z' \gvn \tilde{K}} \prod_{i \notin S_{j}} 1 - \pr{z' \gvn \tilde{K}} \right] .
	    \end{align}
	Without having to worry about the true galaxy redshifts $z^{\dagger}_{i}$, we can rewrite this in the form of \Eq{eqn:binomial} as
	    \begin{align}
	    \label{eqn:proofwithprior2}
	    \langle K \rangle &= \sum_{K' = 0}^{N} K' \sum_{j \backepsilon |S_{j}| = K'}^{\binom{N}{K'}} \left[ \left(\frac{\tilde{K}}{N}\right)^{K'} \left(1 - \frac{\tilde{K}}{N}\right)^{N - K'} \right] .
	    \end{align}
	Since the \pzpdf s are identical to one another, the bracketed terms are the same for a given $K'$, meaning each term in the inner sum is the same, so we can eliminate the inner sum as 
	    \begin{align}
	    \label{eqn:proofwithprior3}
	    \langle K \rangle &= \sum_{K' = 0}^{N} K' \binom{N}{K'} \left(\frac{\tilde{K}}{N}\right)^{K'} \left(1 - \frac{\tilde{K}}{N}\right)^{N - K'} .
	    \end{align}
	Noting that the first term with $K = 0$ always vanishes, canceling factors of $K$, and factoring out a quantity motivated by the coin-flipping analogy of \Sect{sec:tworedshift}, we  rewrite \Eq{eqn:proofwithprior3} as 
	    \begin{align}
	    \label{eqn:proofwithprior4}
	    \langle K \rangle &= N \frac{\tilde{K}}{N} \sum_{K' = 1}^{N} \binom{N - 1}{K' - 1} \left(\frac{\tilde{K}}{N}\right)^{K' - 1} \left(1 - \frac{\tilde{K}}{N}\right)^{(N - 1) - (K' - 1)} .
	    \end{align}
	We recognize this as a binomial expansion of
	\begin{align}
	    \label{eqn:proofwithprior5}
	    \langle K \rangle &= \tilde{K} \left(\frac{\tilde{K}}{N} + 1 - \frac{\tilde{K}}{N}\right)^{N - 1} .
	    \end{align}
	Since the term in parentheses is $1$, we arrive at ${\langle K \rangle = \tilde{K}}$ for this case.
	As $\langle K \rangle = \tilde{K} = \hat{\mathcal{N}}(z')$, the only way for the result of stacking and the expected value of the redshift distribution to be equal to the true value $K^{\dagger}$ of the redshift distribution is if $\tilde{K} = K^{\dagger}$, i.e. if the prior is equal to the truth.
\end{proof}

Thus stacking is valid if the data is uninformative and if the true $N(z)$ is known \textit{a priori}.
Previous low-redshift galaxy surveys with a sufficient sample size to overcome cosmic variance could serve as an appropriate prior for one another's redshift distributions because they would have complete support over the redshift range of their galaxy samples.
However, at higher redshifts or in atypical galaxy environments, a lucky guess of ${\tilde{K} \approx K^{\dagger}}$ is tremendously unlikely;
in fact, if the truth were known independently of the redshift survey itself, it would not be necessary to collect the photometric data at all. 
Though stacking works in the case of uninformative data with a prior equal to the truth, this situation is not realistic for upcoming nor ongoing galaxy surveys.

\section{Conclusion}
\label{sec:disco}

This letter explores the common assumptions and misconceptions that have inadvertently disguised the stacked estimator of the redshift distribution as valid throughout the cosmological literature.
We establish a complete and consistent way to frame the redshift distribution \Nz\ in terms of \pz\ PDFs and provide proofs comparing the stacked estimator to the provably correct but computationally prohibitive expected value of the distribution of possible redshift distributions.
We find that equivalence is only achievable in two idealized limiting cases, thereby addressing the dual questions of how stacking earned its place among established methodologies in observational cosmology and why it can no longer be trusted in future applications.

The only cases in which stacking is able to recover the true \Nz\ are when the \pz\ PDFs are conditioned on perfect data or conditioned on perfect prior information.
The former case may have been approximately true for high-fidelity spectroscopic or narrow-band photometric redshift surveys of bright sources, however it is unlikely to hold for broad-band photometric surveys probing faint galaxies, such as those used in cosmological weak lensing.
The latter case may have been approximately true for galaxy surveys whose prior-inspiring predecessors covered the same redshift range with comparable depth and sufficient area to overcome cosmic variance, but upcoming surveys will explore higher redshift ranges than have been previously studied, violating the perfect prior condition. 
Finally, while it is possible to imagine a circumstance in which realistically informative and uninformative \pz\ PDFs and priors conspire to produce a stacked \Nz\ that faithfully recovers the truth, such a solution would necessarily have to be very finely tuned and thus is not generically achievable.

While we do not address the numerical degree to which the requirements of stacking were met by previous and ongoing studies, we argue that they will certainly not be met in future cosmological studies. 
Therefore, the message is clear; 
for the accurate estimation of the redshift distribution in next-generation broad-band photometric surveys, we must not stack!

\section*{Acknowledgments}

I thank Michael Blanton, Johann Cohen-Tanugi, Daniel Gruen, Alan Heavens, David Hogg, Jeff Kuan, Rachel Mandelbaum, Eduardo Rozo, Ted Singer, Anze Slosar, James Stevenson, and Angus Wright for helpful feedback on this letter as it developed.
I further thank all of the above, as well as David Alonso, Will Hartley, Mike Jarvis, Boris Leistedt, Tom Loredo, Mark Manera, Phil Marshall, Jeff Newman, and Josh Speagle, for the conversations that inspired this work.
Finally, I express sincere gratitude to the anonymous referee for suggestions that significantly improved this manuscript.

AIM acknowledges support from the Max Planck Society and the Alexander von Humboldt Foundation in the framework of the Max Planck-Humboldt Research Award endowed by the Federal Ministry of Education and Research.
During the completion of this work, AIM was advised by David W. Hogg and supported by National Science Foundation grant AST-1517237.
This work was initiated while AIM was supported by the U.S. Department of Energy, Office of Science, Office of Workforce Development for Teachers and Scientists, Office of Science Graduate Student Research (SCGSR) program, administered by the Oak Ridge Institute for Science and Education for the DOE under contract number DE‐SC0014664.

\bibliography{pedant}

\end{document}